\def\nd{\textsuperscript{nd}}
\def\rd{\textsuperscript{rd}}
\def\th{\textsuperscript{th}}
\newtheorem{theorem}{Theorem}
\newtheorem{lemma}[theorem]{Lemma}
\newtheorem{corollary}[theorem]{Corollary}
\def\genff{\omega}
\DeclareMathOperator{\rank}{rank}
\DeclareMathOperator{\colspan}{colspan}
\def\MM{\mathcal{M}}
\def\vec#1{\mathbf{#1}}
\def\matdesc#1(#2){
\left[\begin{array}{c}#1\end{array}\right]#2
}
\begin{document}
\sloppy

\definecolor{darkgreen}{rgb}{0,0.5,0}
\def\draft#1{\textcolor{darkgreen}{#1}}

\makeatletter
\def\ps@headings{%
\def\@oddhead{\mbox{}\scriptsize\rightmark \hfil \thepage}%
\def\@evenhead{\scriptsize\thepage \hfil \leftmark\mbox{}}%
\def\@oddfoot{}%
\def\@evenfoot{}}
\makeatother
\pagestyle{headings}

\title{Exact Scalar Minimum Storage\\ Coordinated Regenerating Codes}

\author{
	Nicolas Le Scouarnec\\
	Technicolor\\
	Rennes, France
}
\maketitle

\begin{abstract}
We study the exact and optimal repair of multiple failures in codes for distributed storage. More particularly, we examine the use of interference alignment to build exact scalar minimum storage coordinated regenerating codes (MSCR). We  show that it is possible to build codes for the case of $k=2$ and $d \ge k$ by aligning interferences independently but that this technique cannot be applied as soon as  $k \ge 3$ and $d>k$. Our results also apply to adaptive regenerating codes.
\end{abstract}

\section{Introduction}
Codes allow to implement redundancy in distributed storage systems so that device failures  do no hurt the whole system. Yet, to keep preventing failures, once failures have occurred, codes must be repaired: the redundancy level must be kept above some minimum level. The naïve approach to repairing codes consists in decoding the whole code (thus downloading all blocks) so as to encode it again to recreate the few lost blocks. This induces huge repair costs in term of network bandwidth. It has recently been shown that this repair cost can be significantly reduced by repairing without decoding using regenerating codes. Lower bounds on costs (\emph{i.e.}, tradeoffs between storage and bandwidth)  have been established for both the single failure case~\cite{Dimakis2007,Dimakis2010},  and the multiple failures case~\cite{NetCod2011,Hu2010,Shum2011}. Adaptive regenerating codes, departing from the other studies by allowing the number of devices involved to differ between repairs, have been defined in~\cite{NetCod2011}. The two extreme points of the optimal tradeoffs are  Minimum Bandwidth (MBR/MBCR), which minimizes repair cost first, and Minimum Storage (MSR/MSCR), which minimizes storage first.  Codes matching these theoretical tradeoffs can be built using non-deterministic schemes such as random linear network codes.

However, non-deterministic schemes for regenerating codes are not desiderable since they \emph{(i)} require a great field size, \emph{(ii)} require homomorphic hash functions to provide basic security (integrity checking), \emph{(iii)} cannot be turned into systematic codes, which offer access to data without decoding, and \emph{(iv)} provide only probabilistic guarantees. Deterministic schemes overcome these issues by offering exact repair (\emph{i.e.}, during a repair, the regenerated block is equal to the lost block and not only \emph{equivalent}). For the single failure case ($t=1$), code constructions with exact repair have been given for both the MSR point ($n,k,d\ge 2k-2$ ~\cite{Rashmi2011} and $n,k,d$ when the size of the file is infinite~\cite{Cadambe2010,Suh2010a}) and the MBR point ($n,k,d$~\cite{Rashmi2011}) where $n$ is the number of encoded blocks, $k$ is the number of original blocks, and $d$ is the number of devices contacted during repairs. Recent works on this problem are surveyed in~\cite{Dimakis2010b}. However, the existence of codes supporting the exact repair of multiple failures ($t > 1$) (\emph{i.e.}, exact coordinated/adaptive regenerating codes) is an open question.

In this paper, we focus on this problem, thus extending our previous work on coordinated regenerating codes in~\cite{NetCod2011} with exact repair. We consider the case of $n,k,d > k,t > 1$ for scalar constructions (\emph{i.e.}, $\beta=1$)  and make the following contributions:
\begin{itemize}
\item In the line of  exact scalar minimum storage regenerating codes~\cite{Shah2010b,Rashmi2011,Suh2011}, we propose exact scalar minimum storage coordinated regenerating codes (MSCR) for the case $n,k=2,d \ge k,t=n-d$.  This interference alignment based construction is inspired by \cite{Shah2010b,Suh2011}. (Section~\ref{sec:code})
\item Inteference alignment has been applied to scalar MSR codes by aligning the various interferences independently. We show that when  $k\ge{}3$, aligning interferences independently,  as in~\cite{Suh2011,Shah2010b}, is not sufficient to repair exactly scalar MSCR codes. (Section~\ref{sec:nonachiev}).
\end{itemize} 
Note that these results, which correspond to the MSCR point, also apply to exact scalar adapative regenerating codes~\cite{NetCod2011}.

As explained earlier, most previous works have been limited to single failures ($t=1$).  For the multiple failures, there only exist results for the case $n,k,d=k,t=n-k$, a degenerated case where the repair of regenerating codes and the naïve approach to repairing erasure correcting codes are the same. In this case, the exact repair of MSCR boils down to performing, in parallel, the repair of $t$ independent erasure correcting codes~\cite{Shum2011}. A similar construction exists for MBCR codes~\cite{Shum2011b}. The position of our codes among existing codes constructions is detailled in Section~\ref{sec:related}.

\section{Background}

\begin{figure*}[t]
\centering
\subfloat[Functional repairs]{\centering
        \includegraphics[height=0.17\linewidth,trim=0 -35 0 0]{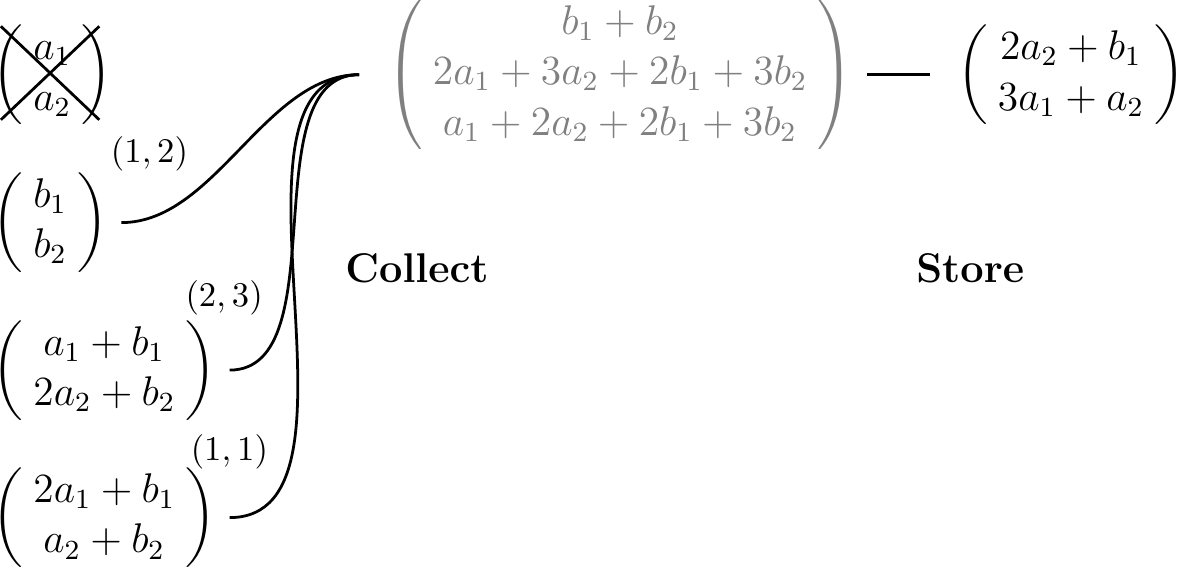}
        \label{fig:functional}
} \hfil
\subfloat[Exact repairs (scalar $\beta=1$)]{\centering
        \includegraphics[height=0.17\linewidth,trim=0 -35 0 0]{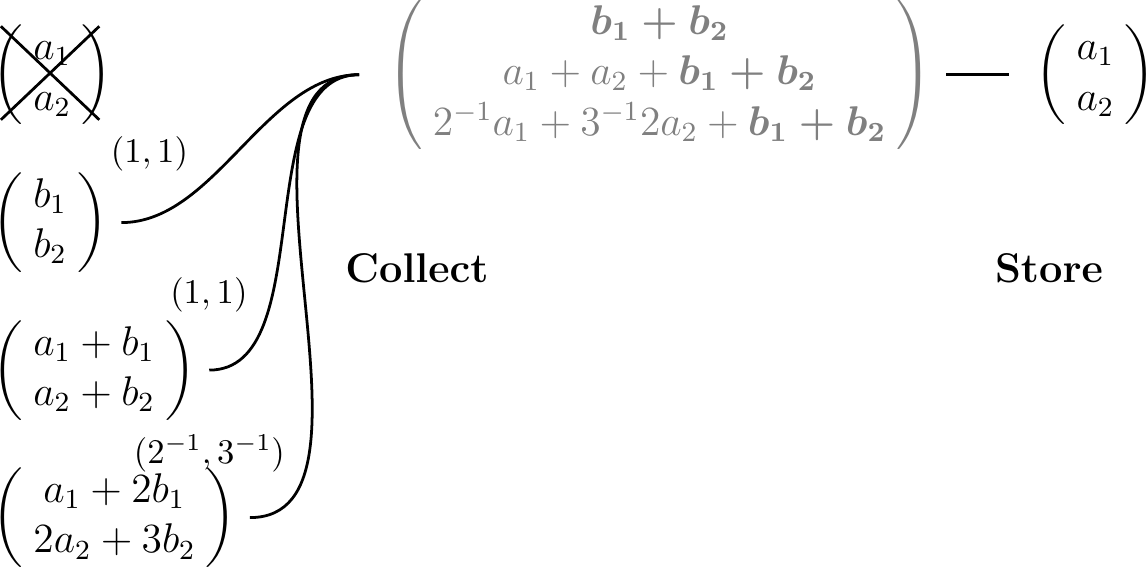}
        \label{fig:exact}
}\hfil
\subfloat[Repair (vector $\beta=2$)]{\centering
        \includegraphics[height=0.18\linewidth]{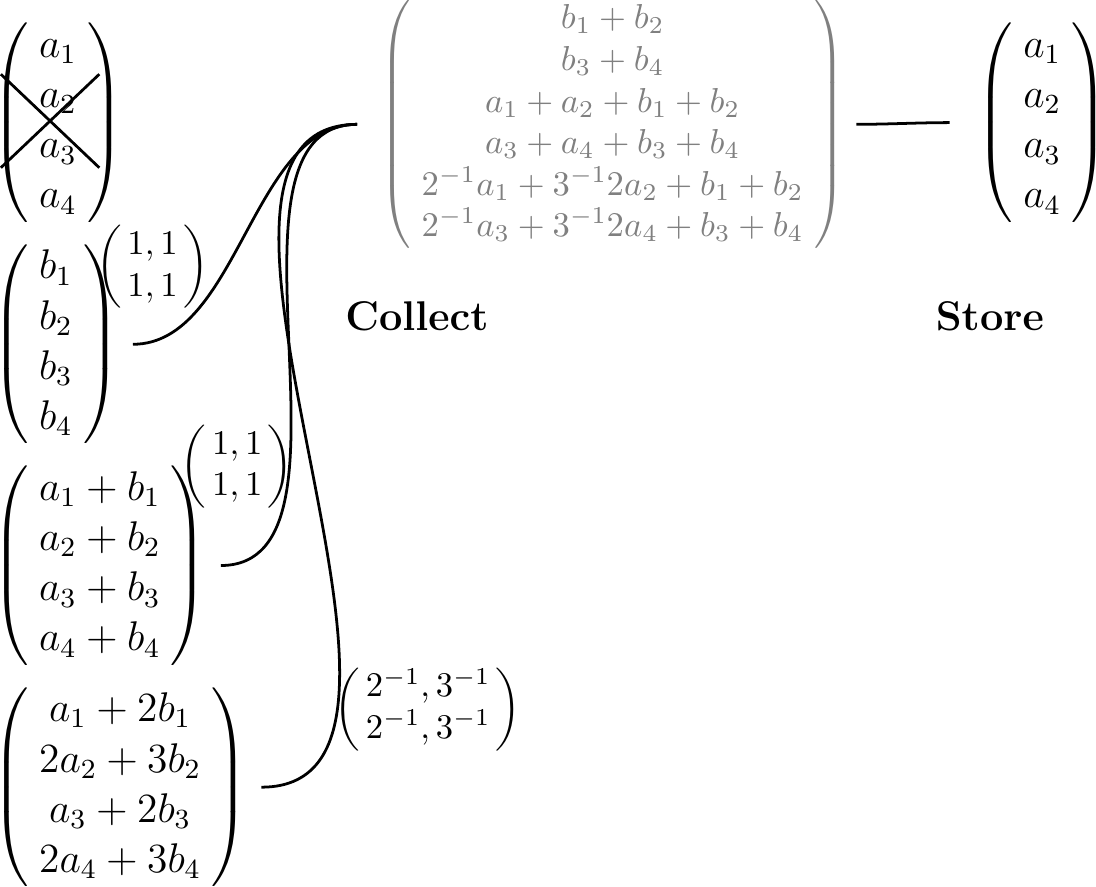}
        \label{fig:sub}
}
\caption{Regenerating codes can be repaired functionally or exactly. In our example, the device storing $(a_1,a_2)$ fails and is regenerated. When relying on functional repairs, the information about $(a_1,a_2)$ is regenerated but not in the same form, while when relying on exact repairs, $(a_1,a_2)$ is regenerated exactly. This figure also illustrates the difference between scalar codes where scalar are transmitted over the network and vector codes where vectors are sent over the network.}
\label{fig:soa}
\end{figure*}

We consider a $n$ devices system storing a file of $\MM$ bits. The file is encoded and dispatched accross all $n$ devices (each storing $\alpha$ bits) so that the file can be recovered by collecting data from any $k$ devices. Whenever  devices fail, they must be repaired so that the level of redundancy does not fall bellow a critical level.  Classical erasure correcting codes require a decoding to be performed to repair any single lost block by encoding the decoded data and dispatching again. This approach has huge repair costs (in term of network communications). It has been shown that this cost can be significantly reduced by relying on regenerating codes~\cite{Dimakis2007,Dimakis2010}. Similar results have been given for repairing multiple failures using coordinated/cooperative regenerating codes~\cite{NetCod2011,Hu2010,Shum2011}.

For repairing coordinated regenerating codes, each failed device\footnote{In the article, we use \emph{failed devices} to designate either the devices that have failed, or the new spare devices that holds the repaired data. The meaning will be clear from the context.} contacts $d \ge k$ live devices and gets $\beta$ bits from each. The $t$ failed devices coordinate by exchanging $\beta'$ bits. The data is then processed and $\alpha$ bits are stored.
The amounts of data exchanged and stored during repairs are summarized on Figure~\ref{fig:repair_rc}.
These studies lead to the definition of the optimal tradeoffs between storage $\alpha$ and repair costs $\gamma=d\beta+(t-1)\beta'$. The two extreme points of the optimal tradeoffs are shown on Figure~\ref{fig:cr} with the corresponding values of $\alpha$, $\beta$ and $\beta'$. The MSCR (resp. MBCR) point minimizes storage (resp. bandwidth) first. In this paper, we will focus on MSCR constructions for they are very close to classical erasure correcting codes and are highly related to adaptive regenerating codes.

\begin{figure}[h]
\centering
        \includegraphics[width=0.68\linewidth]{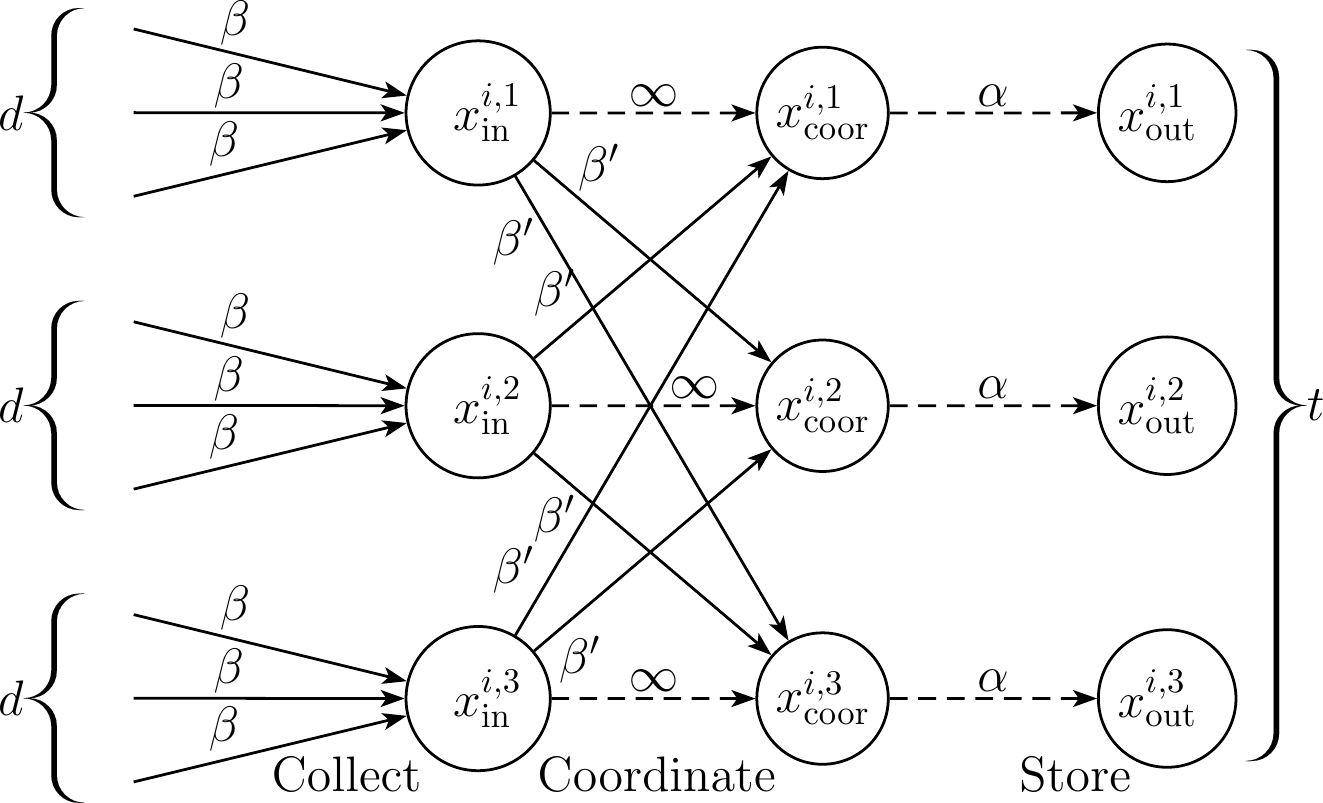}
        \caption{Amounts of information exchanged during the repair of $t$ failed devices from $d$ live devices (an infinite capacity means that all the information received is kept for later processing). During a first step, each failed device collects $\beta$ bits from $d$ live devices. All failed devices coordinate by exchanging $\beta'$ bits. The data is processed and $\alpha$ bits are stored. Solid lines show transfers over the network.}
        \label{fig:repair_rc}
\end{figure}

These tradeoffs are derived from network coding results, through a reduction to a multicast problem. Hence, non-determinstic coding schemes matching these tradeoffs  can be built using random linear network codes. The corresponding non-deterministic repairs are termed as functional repairs. Yet, such codes have several disadvantages: \emph{(i)} they have high decoding costs, \emph{(ii)} they make the implementation of integrity checking complex by requiring the use of homomorphic hashes, \emph{(iii)} they cannot be turned into systematic codes, which provide access to data without decoding, and \emph{(iv)} they can only provide probabilistic guarantees.

\begin{figure}[htbp]
	\centering%
	\includegraphics[width=0.85\linewidth]{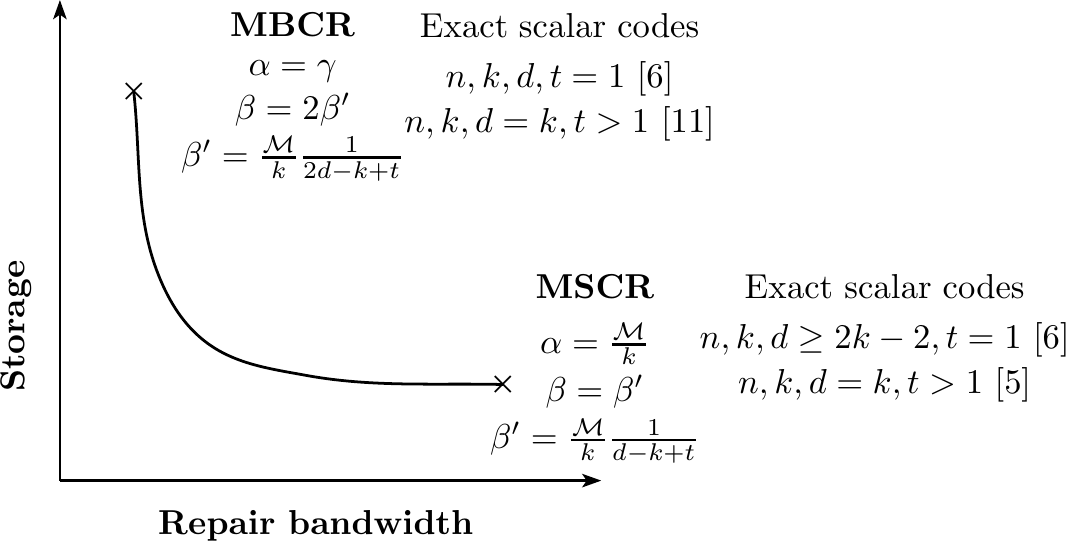}%
	\caption{Regenerating codes achieve the optimal tradeoff between storage and bandwidth (\emph{i.e.}, repair cost). The figure shows the values for the MSCR and MBCR points. The figure also shows the best exact scalar regenerating codes ($\beta=1$) known for the single ($t=1$) and multiple failure cases ($t>1$).}%
	\label{fig:cr}%
\end{figure}

\begin{figure*}[t]
	\centering%
	\includegraphics[width=0.9\linewidth]{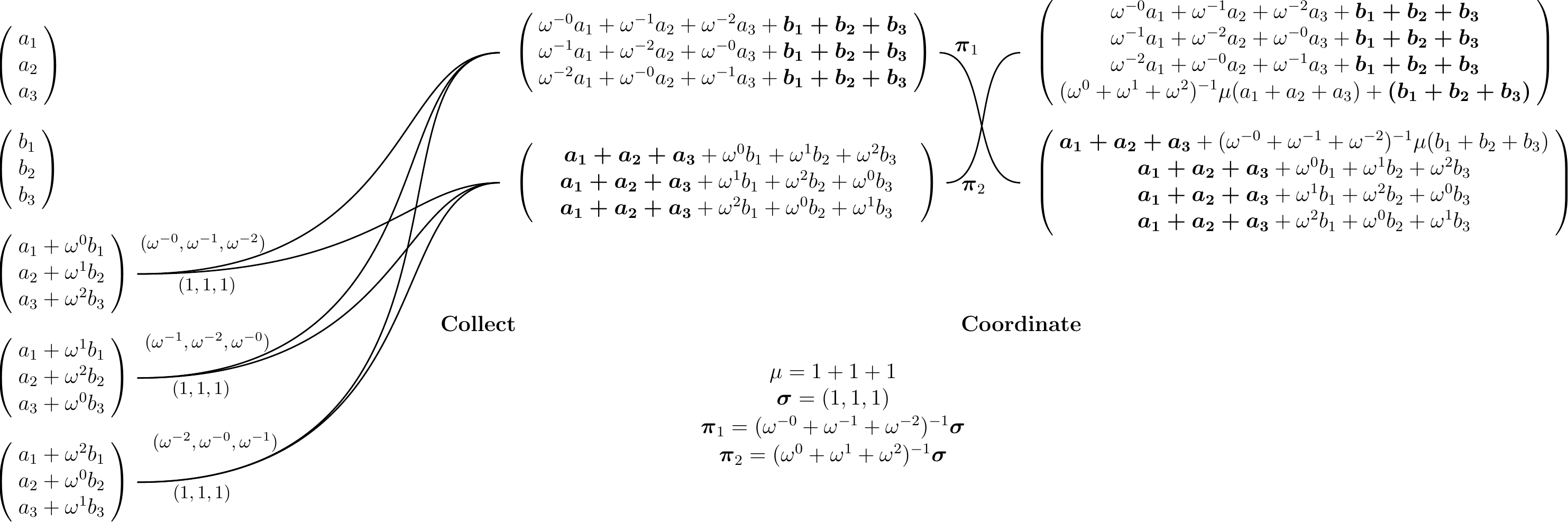}%
	\caption{Exact Repair of the systematic part of an MSCR code ($n=5,k=2,d=3,t=2$). The state of the system after the storing step is not shown but it is clear that the first device can recover $\vec{a}$ and that the second one can recover $\vec{b}$.}
	\label{fig:n5k2t2d3}
\end{figure*}

To this end, it has been proposed to study deterministic schemes, namely exact regenerating codes,  regenerating blocks equal to the lost ones instead of blocks only functionally equivalent. The difference between exact repair and functional repair is shown on Figure~\ref{fig:soa}.  It has been shown that exact repair is strictly harder than functional repair~\cite{Suh2011}, which means that the existence of functional regenerating codes does not imply that exact regenerating codes exist. Hence, an interesting question is whether the previous tradeoffs, which apply to functional repairs, can still be achieved for exact repairs. The problem of repairing exactly a single failure has been well studied~\cite{Dimakis2010b,Wu2009,Shah2010b,Cadambe2010,Rashmi2011,Suh2010a,Suh2011,Papailiopoulos2011}, including intermediary repair schemes such as semi-exact repairs where only a part of the data is regenerated exactly~\cite{Rashmi2009,Wu2011, Cadambe2011,Tamo2011}. However,  the exact repair of multiple failures has been studied mostly according to functional repairs~\cite{NetCod2011,Hu2010,Shum2011} except for the very specific setting $d=k$~\cite{Shum2011,Shum2011b}.  

In this article, we consider only scalar codes where one indivisible sub-block is transmitted between devices during repairs (\emph{i.e.}, $\beta'=1$), thus leading to simpler constructions (Figure~\ref{fig:exact}). When considering the exact repair of single failures, it has been shown that scalar codes are sufficient to construct MBR codes for any value of $n,k,d$ and MSR codes for any values $n,k,d\ge2k-2$. However, scalar codes are not sufficient for building exact scalar MSR codes when $d < 2k-3$~\cite{Shah2010b}. The discussion of vector codes constructions, in which multiple indivisible sub-blocks are transmitted between devices during repairs (\emph{i.e.}, $\beta'>1$) (Figure~\ref{fig:sub}), is deferred to Section~\ref{sec:related} about related work.

In the sequel of the article, we will study the exact repair of regenerating codes when multiple failures occur. We study the non-degenerated case of $d > k$ and use scalar codes ($\beta=1$). We adopt following convention: the data $\vec{v}$ and the codewords $\vec{w}$ are column vectors, the generator matrix $\vec{G}$ is rectangular and the encoding operation $\vec{w}=\vec{G}\vec{v}$ gives a column vector. 

\section{Exact MSCR codes for $k=2$}
\label{sec:code}

In this section, we provide a code construction for scalar MSCR codes supporting exact repairs for $d>k$, $k=2$ and $t=2$.  This code construction also serves as a proof that its possible to repair exactly a $(n,k=2,d=n-t>k,t=2)$ MSCR code.

We consider a system storing a file of size $\MM=k(d-k+t)$ split in $k=2$ blocks $(\vec{a}, \vec{b})$, each of size $\alpha=d-k+t$ sub-blocks. The system consists of $n=d+t$ devices as we assume that all failed devices and all live devices take part to the repair.  In the sequel of the article, we consider a finite field $\mathbb{F}$ having a generator element $\genff$.

The system is compounded of two devices storing the systematic part and $s=n-2$ devices storing the redundancy part.
\begin{itemize}
\item The first systematic device stores $\vec{a}=(a_1,\dots,a_{\alpha})^t$.
\item The second systematic device stores $\vec{b}=(b_1,\dots,b_{\alpha})^t$.
\item The $i$-th redundancy device, $i \in \{0\dots{}\alpha-1\}$ stores \\$\vec{r_i}=(a_1+\genff^{i \mod \alpha}b_1,\dots,a_{\alpha}+\genff^{i + \alpha -1 \mod \alpha}b_\alpha)^{t}$
\end{itemize}
 An example for $k=2$, $d=3$ and $t=2$ is given on Figure~\ref{fig:n5k2t2d3}.

Using the previously defined code, we can state the two following theorems:
\begin{theorem}
It is possible to build minimum storage coordinated regenerating codes that can be repaired exactly when $n=d+t$ (\emph{i.e.}, all devices participate in the repair\footnote{The code we define and the proofs are given for $n=d+t$ for the sake of clarity. However, the method can also be applied to codes where $n>d+t$}), $k=2$ and $t=2$ (\emph{i.e.}, multiple repairs are performed simultaneously).
\label{thm:emscr}
\end{theorem}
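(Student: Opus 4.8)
The plan is to verify that the code of Section~\ref{sec:code} satisfies the two defining properties of an exact MSCR code: the \emph{reconstruction} property, that the file $(\vec{a},\vec{b})$ can be recovered from any $k=2$ of the $n=d+2$ devices, and the \emph{exact repair} property, that any $t=2$ failed devices can be regenerated \emph{exactly} from the $d$ live devices with $\beta=1$ and $\beta'=1$. Throughout I write $\vec{r_i}=\vec{a}+D_i\vec{b}$, where $D_i$ is the diagonal matrix whose $j$-th entry is $\genff^{(i+j-1)\bmod\alpha}$ and $\alpha=d-k+t=d$. The structural fact I will lean on is that, because of the reduction modulo $\alpha$, the diagonals of $D_0,\dots,D_{\alpha-1}$ are the successive cyclic shifts of $(\genff^{0},\dots,\genff^{\alpha-1})$; in particular the $D_i$ are \emph{not} scalar multiples of one another, which is exactly what prevents the repair matrices from degenerating.

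For reconstruction I would argue by cases on the two chosen devices. Two systematic devices give $(\vec{a},\vec{b})$ directly. A systematic device together with $\vec{r_i}$ yields the other block after multiplying by $D_i^{-1}$, which is invertible since $\genff\neq0$. Two redundancy devices $\vec{r_i},\vec{r_{i'}}$ with $i\neq i'$ give $(\vec{a},\vec{b})$ coordinate-wise as long as $\genff^{(i+j-1)\bmod\alpha}\neq\genff^{(i'+j-1)\bmod\alpha}$ for every $j$, which holds once the powers $\genff^{0},\dots,\genff^{\alpha-1}$ are distinct. This is a mild field-size requirement ($\genff$ of order at least $\alpha$) that I will impose from the outset.

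The heart of the proof is exact repair, which I would treat by the interference-alignment principle adapted to the coordinated setting. Consider first the failure of both systematic devices; the live devices are then exactly the $\alpha=d$ redundancy devices. I would have device $\vec{r_i}$ send to the node regenerating $\vec{a}$ the symbol $\vec{u}_i^{t}\vec{r_i}$ with $\vec{u}_i=D_i^{-1}\vec{1}$, and to the node regenerating $\vec{b}$ the symbol $\vec{1}^{t}\vec{r_i}$. Since $\vec{u}_i^{t}\vec{r_i}=\vec{u}_i^{t}\vec{a}+(D_i\vec{u}_i)^{t}\vec{b}$ and $D_i\vec{u}_i=\vec{1}$, all the interference seen by the node regenerating $\vec{a}$ collapses to the single scalar $\vec{1}^{t}\vec{b}$; symmetrically $\vec{1}^{t}\vec{r_i}=\vec{1}^{t}\vec{a}+(D_i\vec{1})^{t}\vec{b}$ collapses the interference seen by the node regenerating $\vec{b}$ to the single scalar $\vec{1}^{t}\vec{a}$. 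Moreover the desired coefficient families $\{D_i^{-1}\vec{1}\}_i$ and $\{D_i\vec{1}\}_i$ are each the columns of a circulant matrix, hence bases for a suitable $\genff$. It remains to remove the one residual interference scalar at each node with the single coordination symbol $\beta'=1$: each failed node forwards one linear combination of its downloads, chosen via the circulant bases above so that the resulting $(\alpha+1)\times(\alpha+1)$ system at each node — the $\alpha$ downloads plus the coordination symbol, in the unknowns (one block together with the residual scalar) — is nonsingular. The remaining failure patterns (one systematic and one redundancy node, or two redundancy nodes) follow the same template; here one or two surviving systematic devices only help, since they contribute interference-free symbols, while regenerating $\vec{r_m}=\vec{a}+D_m\vec{b}$ amounts to the pair of failed nodes jointly acquiring two independent linear forms in $(\vec{a},\vec{b})$, which the same alignment delivers.

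The delicate point — and where the coordinated setting genuinely departs from single-failure MSR — is the design of the coordination symbols. A naive choice leaves the node regenerating $\vec{a}$ needing the scalar $\vec{1}^{t}\vec{b}$ and the node regenerating $\vec{b}$ needing $\vec{1}^{t}\vec{a}$, while neither node can produce the scalar the other wants before completing its own repair, which creates a circular dependency. I would break it by letting each node transmit a combination of its \emph{downloaded} symbols rather than a function of the not-yet-recovered block: because the residual interference is one-dimensional, one such combination carries precisely the missing degree of freedom, and it remains only to check that the two resulting square systems are simultaneously invertible. I expect this simultaneous nonsingularity, together with the circulant conditions above, to be the main obstacle; all of it amounts to excluding finitely many values of $\genff$, so the construction succeeds over any sufficiently large field, and exhibiting the explicit scheme for the instance $n=5,k=2,d=3,t=2$ of Figure~\ref{fig:n5k2t2d3} certifies the general template.
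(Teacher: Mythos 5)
Your treatment of the core failure pattern (both systematic devices lost) is essentially the paper's own proof: your projection vectors $\vec{u}_i=D_i^{-1}\vec{1}$ and $\vec{1}$ are exactly the paper's repair vectors $(\genff^{-(i \bmod \alpha)},\dots,\genff^{-(i+\alpha-1 \bmod \alpha)})$ and $(1,\dots,1)$, and your resolution of the ``circular dependency'' --- each spare node forwards a linear combination of its \emph{downloads}, which is automatically aligned with the residual interference scalar --- is precisely the paper's coordination step, where the second spare node sends $(\genff^{0}+\dots+\genff^{\alpha-1})^{-1}$ times the sum of its received symbols. Both you and the paper then assert rather than prove that the resulting $(\alpha+1)\times(\alpha+1)$ system is nonsingular; your explicit genericity condition on $\genff$ is, if anything, more careful than the paper's bare claim of independence.

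The genuine gap is in the remaining failure patterns, which the theorem equally requires. Your claim that surviving systematic devices ``only help, since they contribute interference-free symbols'' is wrong: if the failed devices are $\vec{r_m}$ and $\vec{r_{m'}}$, a symbol $\vec{v}^{t}\vec{a}$ from the live $\vec{a}$-device is neither desired data nor pure interference relative to the blocks being rebuilt --- writing $\vec{a}=(I-D_m(D_m-D_{m'})^{-1})\vec{r_m}+D_m(D_m-D_{m'})^{-1}\vec{r_{m'}}$ shows it mixes both, so its interference component must be aligned like everyone else's. Likewise your per-coordinate remark (``two independent linear forms in $(\vec{a},\vec{b})$'') does not carry the argument, because a repair node never gets per-coordinate access: each download is one scalar mixing all $\alpha$ coordinates. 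The missing idea is the paper's change of variables: since the code is MDS, any two failed devices can be taken as the systematic pair of an equivalent code $\mathcal{C'}$ (the standard argument from the single-failure literature), and because the encoding matrices $D_i$ are diagonal and invertible, the equivalent code has the same structure, so the systematic-repair scheme applies verbatim. Adding that reduction would complete your proof; without it, the non-systematic cases remain unproved.
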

\begin{proof}
In the sequel of this section, we review the different properties that are needed for this code to be an MSCR code:
\begin{itemize}
\item It must be an MDS code (\emph{i.e.}, data from any $k=2$ devices must allow recovering the original data).
\item Any two  devices can be repaired exactly.
\end{itemize}
The theorem follows from the code satisfying these properties.
\end{proof}

\begin{theorem}
It is possible to build adaptive regenerating codes that can be repaired exactly when $n=d+t$ (\emph{i.e.}, all devices participate in the repair\footnote{Similarly to~\ref{thm:emscr}, the method can also be applied when $n>d+t$.}), and $k=2$. 
\label{thm:earc}
\end{theorem}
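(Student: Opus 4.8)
The plan is to observe that the \emph{same} code of Section~\ref{sec:code} already is an adaptive regenerating code, so that Theorem~\ref{thm:earc} reduces to the correspondence between the minimum storage coordinated point and the minimum storage adaptive point introduced in~\cite{NetCod2011}. Recall that an adaptive regenerating code is a single stored representation whose repairs may, from one repair to the next, involve different numbers of failed devices $t$ (and hence contact a different number $d = n - t$ of live devices). Because our repairs are \emph{exact}, after any such repair the system is byte-for-byte back in its original coded state; this closure is precisely what lets repairs with different parameters be chained over the lifetime of the system, and it is the reason exactness, rather than mere functional equivalence, is required here.

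The crux is a storage-invariance check. For our construction $\alpha = d - k + t$, and since we work under $n = d + t$ this collapses to $\alpha = n - k = n - 2$, \emph{independently} of how $n$ is split into $d + t$. Thus the $n-2$ sub-blocks held by each device are simultaneously minimum-storage for every admissible pair $(d,t)$ with $d + t = n$ and $d \ge k$, so a single stored representation can legitimately serve all repair regimes. It then remains to supply an exact repair for each failure count the code must support. The MDS property is already established in the proof of Theorem~\ref{thm:emscr}, so any $k = 2$ devices still recover $(\vec{a},\vec{b})$. For $t = 2$ the repair is exact by Theorem~\ref{thm:emscr}. For $t = 1$ the coordination phase is vacuous ($\beta'$ is irrelevant with a single failed device) and the task degenerates to an exact scalar MSR repair with $d = n - 1$; since $k = 2$ gives $d \ge 3 \ge 2k - 2$, this lies in the regime where the interference-alignment repair of Section~\ref{sec:code} specializes correctly.

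I expect the main obstacle to be making the repair work uniformly for every admissible $t$ from one fixed representation, rather than for $t = 2$ alone: one must check that the interference-alignment argument underlying Theorem~\ref{thm:emscr} goes through for general $t = n - d$ (and for the degenerate $t = 1$ case) using the same stored sub-blocks, and that each repair leaves the code in the exact state demanded before the next, possibly differently parametrized, repair. Once this uniform exact repairability is in hand, the code meets the definition of an exact adaptive regenerating code and the theorem follows.
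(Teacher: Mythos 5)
Your proposal follows essentially the same route as the paper: reuse the code of Section~\ref{sec:code} and reduce Theorem~\ref{thm:earc} to three properties --- the MDS property, exact repair of two failures (Theorem~\ref{thm:emscr}), and exact repair of a single failure, the latter handled by dropping the coordination step and having the surviving systematic device send the aligned interference $\sigmav\vec{b}$ directly. Your added observations (that $\alpha = n-k$ is invariant under the split of $n$ into $d+t$, and that the $t=1$ case sits in the $d \ge 2k-2$ scalar MSR regime) are consistent with, and implicit in, the paper's argument.
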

\begin{proof}
In order to show that there exists adaptive regenerating codes~\cite{NetCod2011}, that can be repaired exactly, we need to find a code that has the following properties.
\begin{itemize}
\item It must be an MDS code (\emph{i.e.}, data from any $k=2$ devices must allow recovering the original data).
\item Any two  devices can be repaired exactly.
\item Any single failure can be repaired exactly.
\end{itemize}
The theorem follows from the code satisfying these properties. Note that the two first properties are common with the proof of Theorem~\ref{thm:emscr}.
\end{proof}

\subsection{The MDS property}
This property is trivially satisfied since, when fetching data from any two devices, we get $\alpha$ groups of 2 equations over 2 unknowns, where each group concerns different unknowns. The $i^\mathrm{th}$ group is about $a_i$ and $b_i$ and consists of 2 independent equations. Hence, the unknowns of each group can be recovered and the MDS property is satisfied.

\subsection{Repairing two failures}
The repair consists of the following steps, which map onto the process defined in~\cite{NetCod2011}. In this scheme,  illustrated in Figure~\ref{fig:n5k2t2d3-r12gen}, we do not rely on random linear network coding but give a method for repairing exactly. 

\begin{figure}[htbp]%
\centering
\includegraphics[width=0.9\linewidth]{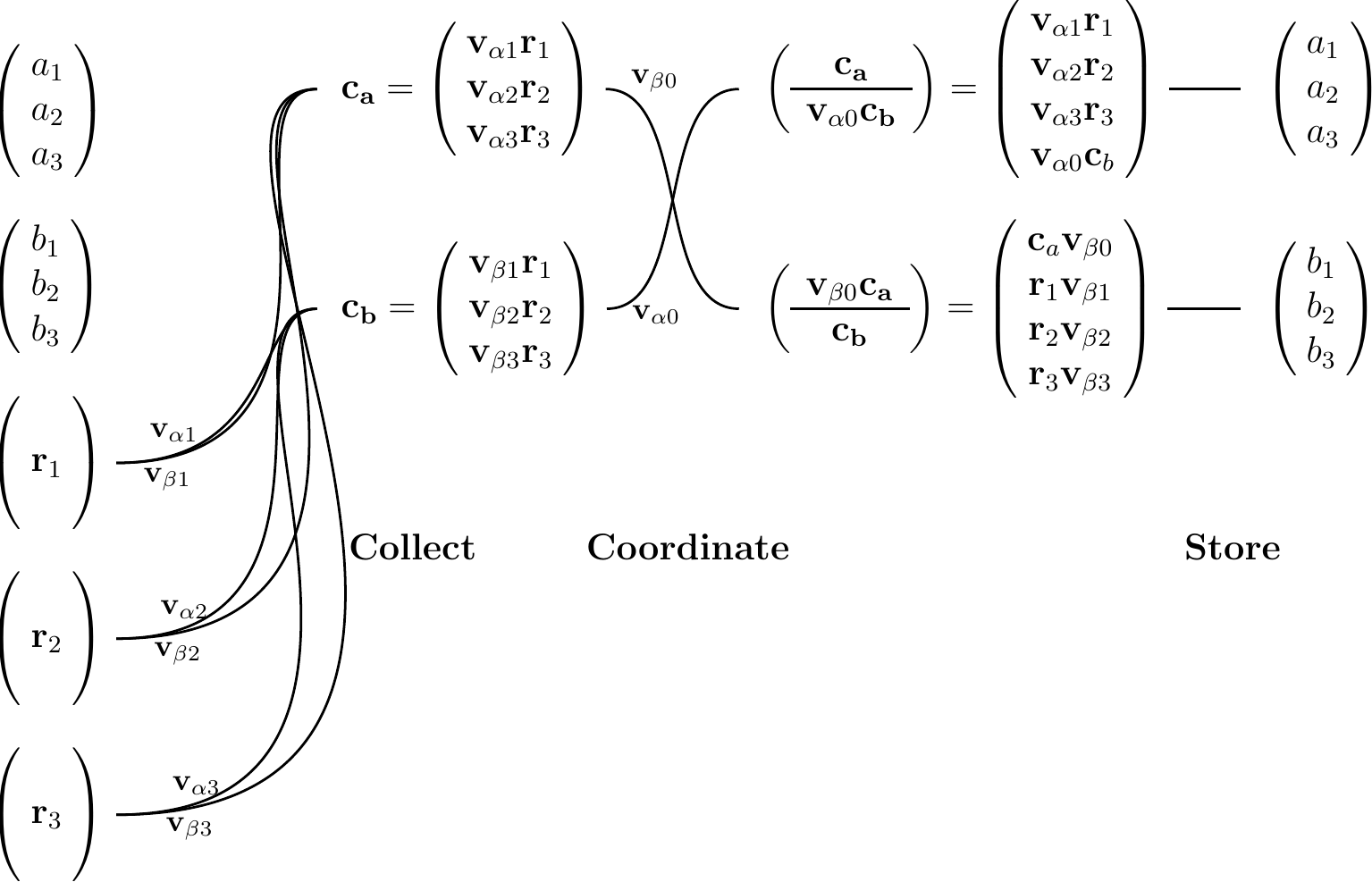}%
\caption{The repair process, with a coordination step. Interfering information transmitted is aligned to allow the recovery of $\vec{a}$ and $\vec{b}$.}
\label{fig:n5k2t2d3-r12gen}
\end{figure}

\begin{IEEEdescription}[\IEEEsetlabelwidth{$\negthickspace\negthickspace\negthickspace\negthickspace$}]%
\item \textbf{1. Identify lost data.} Prior to performing the repair, the system identifies which devices have failed and which blocks have been lost. Given the failure of any two devices (systematic or redundancy), we perform a change of variables to transform the actual code $\mathcal{C}$ into a code $\mathcal{C'}$, in which the failed devices are the systematic ones storing $\vec{a}=(a_1\dots{}a_d)^t$ and $\vec{b}=(b_1\dots{}b_d)^t$. Such a code is guaranteed to exist since the original code is MDS  (same argument as in~\cite{Shah2010b}). Furthermore, the system identifies two spare devices than can host the repaired blocks replacing the lost ones.
\item \textbf{2. Prepare (Collect).} Each live device that participates to the repair computes a sub-block to be sent to the first device and a sub-block to be sent to the second device. All the sub-blocks to be sent to the first device have the common property that the interfering information about $\vec{b}$ is aligned (\emph{i.e.},  the \emph{i}-th live device, storing $\vec{r}_i$,  sends\footnote{In this description, $\vec{v}_{\alpha{}i}\vec{r}_i$, $\vec{w}_{\alpha{}i}\vec{a}$ or $\vec{z}_{\alpha{}i}\vec{b}$ are of scalars (\emph{i.e.}, the resulting matrices are of dimension $1\times{}1$). As a result $\vec{c_a}=(\vec{v}_{\alpha{}1}\vec{r}_1,\dots,\vec{v}_{\alpha{}d}\vec{r}_d)^t$ is a matrix of size $d\times{}1$ and $(\vec{c_a} | c_b\vec{v}_{\alpha{}0})^t$ a matrix of dimension $(d+1) \times 1$.} $\vec{v}_{\alpha{}i}\vec{r}_i=\vec{w}_{\alpha{}i}\vec{a}+\vec{z}_{\alpha}\vec{b}$ so that the spare device receives different information about $\vec{a}$ but the same about $\vec{b}$. To build $\vec{v}_{\alpha{}i}$, given some arbitrary alignment vector  $\vec{z}_\alpha$ and given that $\vec{r}_i=\vec{A}_i\vec{a}+\vec{B}_i\vec{b}$, the repair vector is $\vec{v}_{\alpha{}i}=\vec{z}_\alpha\vec{B}_i^{-1}$. Since the MDS property is satisfied (\emph{i.e.}, we can recover from  $\vec{a}$ and $\vec{r}_i$), $\vec{B}_i$ is invertible, and the repair vector exists. The same applies for $\vec{v}_{\alpha{}0}$ (with $\vec{c_b}=\vec{A}_0\vec{a}+\vec{B}_0\vec{b}$) and $\vec{v}_{\beta{}i}$. The role of $\vec{a}$ and $\vec{b}$ are reversed for sub-blocks to be sent to the second device.
\item \textbf{3. Transfer (Collect).} The sub-blocks prepared are sent and the first (resp. second) spare device stores them temporarily as $\vec{c_a}=(\vec{v}_{\alpha{}1}\vec{r}_1,\dots,\vec{v}_{\alpha{}d}\vec{r}_d)^t$ (resp. $\vec{c_b}$) for further processing during steps 4 and 6.
\item \textbf{4. Prepare (Coordinate).} Using what has been received in step 3, the second spare device prepares a sub-block $\vec{v}_{\alpha{}0}\vec{c_b}=\vec{w}_{\alpha{}0}\vec{a}+\vec{z}_\alpha\vec{b}$ to be send to the first spare device. The interfering information about $\vec{b}$ is aligned as in sub-blocks prepared during step 2. Again, the role $\vec{a}$ and $\vec{b}$ are reversed for the sub-block to be sent from the first to the second spare device.
\item \textbf{5. Transfer (Coordinate).} The sub-blocks prepared are sent and the first (resp. second) spare devices adds them to blocks received in step 3 thus storing $(\vec{c_a} | \vec{v}_{\alpha{}0}\vec{c_b})^t$ (resp. $(\vec{c_b} | \vec{v}_{\beta{}0}\vec{c_a})^t$ ).
\item \textbf{6. Recover and Store.} The $d+1$ sub-blocks $(\vec{c_a} | \vec{v}_{\alpha{}0}\vec{c_b})^t= (\vec{w}_{\alpha_1}\vec{a}+\vec{z}_{\alpha}\vec{b},\dots,\vec{v}_{\alpha{}d}\vec{a}+\vec{z}_{\alpha}\vec{b},\vec{w}_{\alpha{}0}\vec{a}+\vec{z}_{\alpha}\vec{b})^t$ allow recovering both the interfering information received $\vec{w}\vec{b}$ (but not the individual values of $b_i$), and all the desired information $\vec{a} = (a_1\dots{}a_d)^t$ (\emph{i.e.}, the individual values of all sub-block $a_i$) : the received sub-blocks define $d+1$ equations over $d+1$ unknowns $(\vec{z}_{\alpha}\vec{b},a_1,\dots,a_d)$. The lost sub-blocks are thus restored. The second spare device performs a similar processing with the role of $\vec{a}$ and $\vec{b}$ reversed.
\end{IEEEdescription}%

\def\sigmav{\boldsymbol{\sigma}}
\def\piv{\boldsymbol{\pi}}

We now apply this repair method to the code we define, as shown on Figure~\ref{fig:n5k2t2d3}. In order to repair the two systematic devices, during the collecting step, the $i$-th redundancy device sends $(\genff^{-(i \mod \alpha)},\dots,\genff^{-(i+\alpha -1  \mod \alpha)})\vec{r_i}$ to the first device being repaired and $(1\dots1)\vec{r_i}$ to the second device being repaired. The vectors $\vec{v}_{\alpha{}i}$ (resp. $\vec{v}_{\beta{}i}$) are chosen so that $\vec{z}_{\alpha}=\sigmav$ (resp. $\vec{z}_{\beta}=\sigmav$) with $\sigmav=(1\dots1)$. Let us note $\vec{c_a}$ (respectively $\vec{c_b}$) the vector of all $d$ symbols received by the systematic devices repairing $\vec{a}$ (respectively $\vec{b}$).

At the coordination step, the first systematic device sends $(\genff^{-0}+\dots+\genff^{-(\alpha-1)})^{-1}\sigmav\vec{c_a}$ to the second one, while the second one sends $(\genff^{0}+\dots+\genff^{\alpha-1})^{-1}\sigmav\vec{c_b}$ to the first one. 

At the end of these two steps, the first device has received $\alpha+1$ equations. Let us note $\mu=1+\dots+1$ Since all the interfering information about $\vec{b_i}$ is aligned, it can be written as
$$
\left(\begin{array}{c}
\genff^{0}a_1+\dots+\genff^{-(\alpha-1)}a_{\alpha}+\sigmav\vec{b} \\[-1.5eX]
\vdots\\[-0.5eX]
\genff^{-(i \mod \alpha)}a_1+\dots+\genff^{-(i+\alpha -1 \mod \alpha)}a_\alpha+\sigmav\vec{b} \\[-1.5eX]
\vdots\\[-0.5eX]
\genff^{-(\alpha -1 \mod \alpha)}a_1+\dots+\genff^{-(2\alpha-2 \mod \alpha)}a_\alpha+\sigmav\vec{b} \\
(\genff^0+\dots+\genff^{(\alpha-1)})^{-1}\mu(a_1+\dots+a_\alpha)+\sigmav\vec{b}
\end{array}\right)$$

As a consequence, it consists of a system of $\alpha+1$ independent equations and $\alpha+1$ unknowns ($a_i$s and $\sigmav\vec{b}$). As a result, the $\alpha$ unknowns $a_i$ can be recovered. The second device has received something similar with the roles $\vec{a}$ and  $\vec{b}$ exchanged.

This repair process also applies to the repair of redundancy devices. Indeed, during the first step, a change of variables is performed to transform the code $\mathcal{C}$ into a code $\mathcal{C'}$ so that the two redundancy devices (or one redundancy and one systematic device) to be repaired in $\mathcal{C}$ become two systematic devices in $\mathcal{C'}$. Such a code is guaranteed to exist since the original code is MDS~\cite{Shah2010b}. When repairing the 2\nd{} and 3\rd{} devices or the 3\rd{} and 4\th{} devices, the equivalent codes are shown in Figure~\ref{fig:n5k2t2d3-chv}. 

\begin{figure}[htbp]%
\subfloat[devices 2 and 3]{\centering
	\hspace{0.6cm}
	\includegraphics[height=0.7\linewidth]{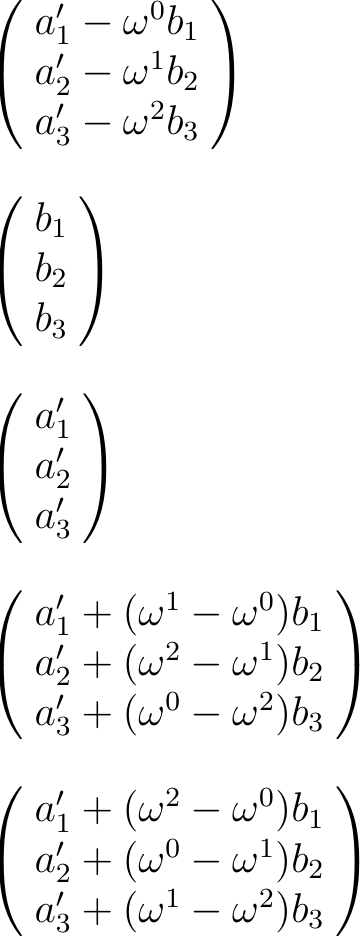}%
	\hspace{0.6cm}
	\label{fig:n5k2t2d3-ns1-chv}%
}
\hfil
\subfloat[devices 3 and 4]{\centering
	\includegraphics[height=0.7\linewidth]{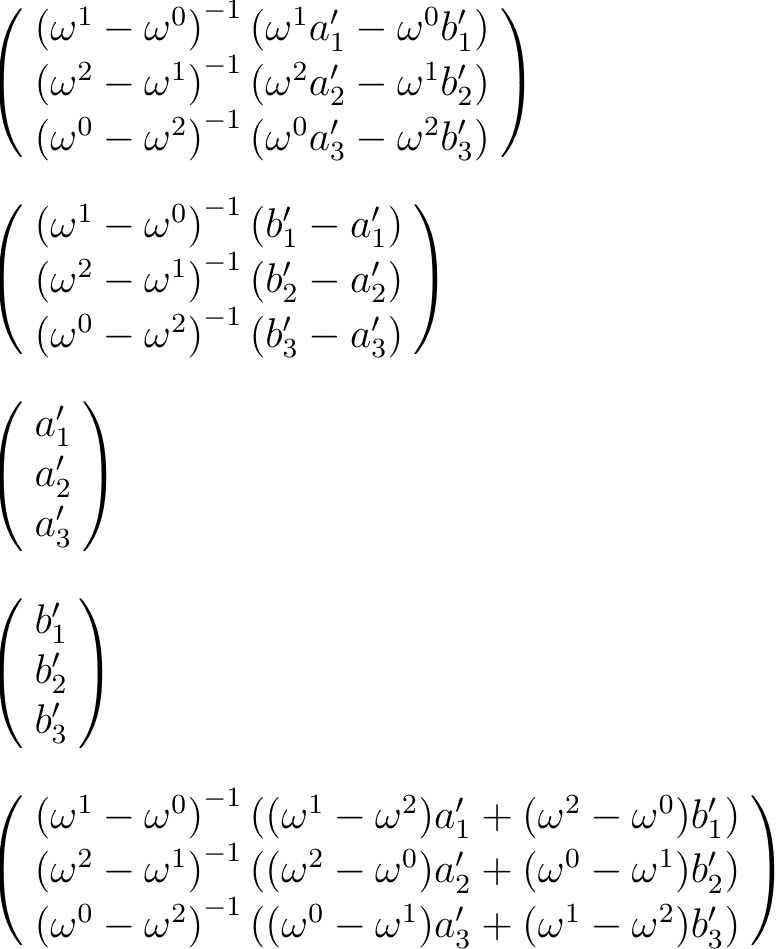}%
	\label{fig:n5k2t2d3-ns2-chv}%
}
\caption{After a change of variable, any two repairs boil down to the repair of two systematics devices. The figure shows the system after a change of variable for the failure of one systematic device and one redundancy device (a) or the failure of two redundancy devices (b).  As a consequence, we can limit our studies to the repair of two systematic devices.}
\label{fig:n5k2t2d3-chv}
\end{figure}

This repair method applied to a code $n=d+t,k=2,d>k,t=2$ ($n=5$ and $d=3$ on Figure~\ref{fig:n5k2t2d3}) naturally extends to other cases such as codes $n>d+t,k=2,d>k,t=2$.

\subsection{Repairing one device}
Finally, repairing one single device is an easier problem, and interference alignment has been used in several codes~\cite{Shah2010b,Suh2011}. However, we need to show that the code construction we present, which support $t=2$, also supports $t=1$ to get exact scalar adaptive regenerating codes. We can apply the same repair method as for repairing two devices except that there is no coordination step and the other systematic device sends directly $\vec{z}_{\alpha}\vec{b} = \sigmav\vec{b}$ during the collecting step. As a result, after the collection step, the failed device has received $\alpha+1$ equations. Since all the interfering information about $\vec{b_i}$ is aligned, it can be written as
$$
\left(\begin{array}{c}
\genff^{0}a_1+\dots+\genff^{-(\alpha-1)}a_{\alpha}+\sigmav\vec{b} \\[-1.5eX]
\vdots\\[-0.5eX]
\genff^{-(i \mod \alpha)}a_1+\dots+\genff^{-(i+\alpha -1 \mod \alpha)}a_\alpha+\sigmav\vec{b} \\[-1.5eX]
\vdots\\[-0.5eX]
\genff^{-(\alpha -1 \mod \alpha)}a_1+\dots+\genff^{-(2\alpha-2 \mod \alpha)}a_\alpha+\sigmav\vec{b} \\
\sigmav\vec{b}
\end{array}\right)$$

As a consequence, it consists of a system of $\alpha+1$ independent equations and $\alpha+1$ unknowns ($a_i$s and $\sigmav\vec{b}$). As a result, the $\alpha$ unknowns $a_i$ can be recovered. 

Since the code we present has the MDS property and supports both repairs of single failures ($t=1$) and repairs of two failures ($t=2$), it implies that it is possible to design exact scalar MSCR codes and exact scalar adaptive regenerating codes, thus leading to Theorems~\ref{thm:emscr} and~\ref{thm:earc}.

\section{Impossibility of Independent Interference Alignment for Exact MSCR when $k\ge 3$}
\label{sec:nonachiev}
In this section, we examine whether the previous scheme, inspired by the repair of single failures~\cite{Shah2010b,Suh2011}, can be applied to multiple failures when $k \ge 3$.

When repairing a single failed systematic\footnote{Again, the repair of a redundancy block in a code $\mathcal{C}$ is equivalent to the repair of systematic block in a code $\mathcal{C'}$.} block $\vec{a}$, the information about the $k-1$ other systematic blocks must be aligned, as shown in~\cite{Shah2010b}. In particular, it is required that blocks are aligned \emph{independently}. Indeed, if we consider that the systematic devices send vectors $\vec{v}_{\beta}\vec{b}$, $\vec{v}_{\gamma}\vec{c}$\dots, and that the $i$-th redundancy device sends $\vec{v}_{\alpha{}i}\vec{a}+\vec{v}_{\beta{}i}\vec{b}+\vec{v}_{\gamma{}i}\vec{c}\dots$, to the device repairing $\vec{a}$, then it must be that, for all $i$, $\colspan{(\vec{v}_{\beta{}i})} = \colspan{(\vec{v}_{\beta})}$, $\colspan{(\vec{v}_{\gamma{}i})} = \colspan{(\vec{v}_{\gamma})\dots}$ (\emph{i.e.}, systematic blocks are considered independently and all the information about each interfering block received at the device performing the repair span only one dimension).

We show that under this requirement, exact repair is not possible if $k \ge 3$. We give a first proof, and explain the meaning of this impossibility on the information flow graph~\cite{Dimakis2010,NetCod2011}.

\begin{figure*}[t]%
\includegraphics[width=\linewidth]{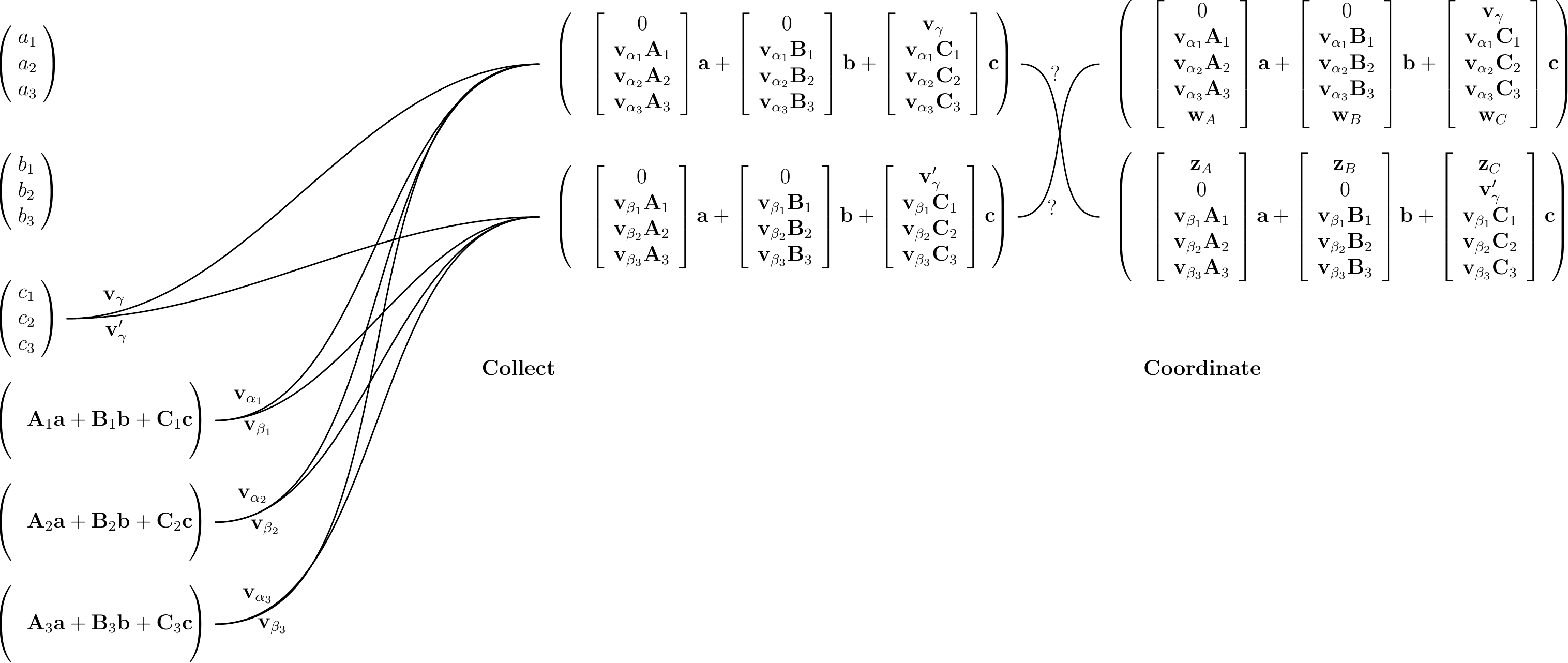}%
\caption{Impossibility of achieving exact repair of the systematic part of an MSCR code ($k \ge 3$, $d > k$ and $t \ge 2$)}%
\label{fig:n6k3t2d4}%
\end{figure*}%

\begin{theorem}
When requiring interference alignment to be applied independently on all devices, it is not possible to repair exactly MSCR codes with $k \ge 3$ and $t \ge 2$ in the scalar case (\emph{i.e.}, $\MM = k(d-k+t)$ such that each device stores only $d-k+t$ sub-blocks of size $\beta=1$). 
\end{theorem}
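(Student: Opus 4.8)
The plan is to argue by contradiction, starting from the reduction already used in Section~\ref{sec:code}: since the code is MDS, any failure pattern can be turned by a change of variables into the repair of systematic devices, so I would fix the two failed devices to be the systematic ones carrying $\vec{a}$ and $\vec{b}$. Because $k\ge 3$, after this reduction there is at least one further systematic block, say $\vec{c}$, which remains live and which interferes with \emph{both} repairs. I would write the sub-block sent by the $i$-th redundancy device to the device repairing $\vec{a}$ as $\vec{v}_{\alpha i}\vec{r}_i=(\vec{v}_{\alpha i}\vec{A}_i)\vec{a}+(\vec{v}_{\alpha i}\vec{B}_i)\vec{b}+(\vec{v}_{\alpha i}\vec{C}_i)\vec{c}$, and record the independent-alignment hypothesis as the requirements $\colspan(\vec{v}_{\alpha i}\vec{B}_i)=\colspan(\vec{u}_\beta)$ and $\colspan(\vec{v}_{\alpha i}\vec{C}_i)=\colspan(\vec{u}_\gamma)$ for fixed $\vec{u}_\beta,\vec{u}_\gamma$ and all contacted $i$.

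The first key step is to isolate the qualitative difference between $k=2$ and $k\ge 3$. When $k=2$ there is a single interfering block, so each redundancy device can pick $\vec{v}_{\alpha i}$ freely to realise the one alignment condition; this is exactly what lets the construction of Section~\ref{sec:code} succeed. When $k\ge 3$ there are at least two interfering blocks, so $\vec{v}_{\alpha i}$ must lie simultaneously in the two one-dimensional spaces $\langle \vec{u}_\beta \vec{B}_i^{-1}\rangle$ and $\langle \vec{u}_\gamma \vec{C}_i^{-1}\rangle$ (both well defined since MDS makes $\vec{B}_i,\vec{C}_i$ invertible). Hence these lines must coincide, forcing $\vec{v}_{\alpha i}\propto \vec{u}_\gamma \vec{C}_i^{-1}$ together with the parallelism constraint $\vec{u}_\gamma \vec{C}_i^{-1}\vec{B}_i \parallel \vec{u}_\beta$ for every such $i$.

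The second, decisive step brings in the coordination round, which is where $t\ge 2$ enters. The message the $\vec{b}$-device sends during coordination is a linear combination of what it collected while repairing $\vec{b}$; by the same hypothesis applied to \emph{that} repair, every sub-block it collected carries its $\vec{c}$-interference in a single direction $\vec{u}'_\gamma$ and its $\vec{a}$-component in a single direction $\vec{u}'_\alpha$. For this message to be usable by the $\vec{a}$-device without breaking alignment, its $\vec{c}$-part must fall in the $\vec{a}$-device's admissible direction, forcing $\vec{u}'_\gamma\parallel\vec{u}_\gamma$; the $\vec{b}$-repair's own parallelism constraint then gives $\vec{u}_\gamma \vec{C}_i^{-1}\vec{A}_i \parallel \vec{u}'_\alpha$ for every $i$. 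But $\vec{u}_\gamma \vec{C}_i^{-1}\vec{A}_i$ is exactly the coefficient of $\vec{a}$ in the sub-block redundancy device $i$ contributes to the $\vec{a}$-repair. Hence all $\alpha$ of these coefficient vectors, together with the $\vec{a}$-coefficient of the coordination message (proportional to $\vec{u}'_\alpha$) and the zero $\vec{a}$-coefficients of the pure-interference sub-blocks, lie in the single line $\langle\vec{u}'_\alpha\rangle$. The information the $\vec{a}$-device obtains about $\vec{a}$ therefore spans at most one dimension, whereas exact repair requires recovering all $\alpha=d-k+t\ge 3$ coordinates (the bound holding since $d>k$ and $t\ge 2$); this contradiction yields the theorem.

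The step I expect to be the main obstacle is making the coordination argument airtight, that is, proving that the $\vec{b}$-device genuinely cannot evade $\vec{u}'_\gamma\parallel\vec{u}_\gamma$ — for instance by forming a combination whose $\vec{c}$-part vanishes — while still delivering useful, aligned information about $\vec{a}$ and $\vec{b}$. Excluding such degenerate combinations is what pins the shared direction $\vec{u}_\gamma$ across the two coupled repairs, and it is this shared direction that collapses the rank of the $\vec{a}$-information. I would handle it by writing the coordination message explicitly in the effective coordinates of the collected data and checking that any admissible combination either keeps the $\vec{c}$-interference in direction $\vec{u}'_\gamma$ or destroys the $\vec{a}$-content needed for recovery, so that the rank bound on the $\vec{a}$-coefficients survives for every choice of code matrices $\vec{A}_i,\vec{B}_i,\vec{C}_i$.
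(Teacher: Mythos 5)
Your core argument is the paper's own proof in mirror image. The paper likewise reduces to the failure of the systematic devices storing $\vec{a}$ and $\vec{b}$, likewise derives from alignment of the $\vec{c}$-interference that $\vec{v}_{\alpha_i}\propto\vec{v}_\gamma\vec{C}_i^{-1}$, and likewise obtains the coupling between the two repairs from the $\vec{c}$-component of a coordination message; the only difference is where the contradiction lands: the paper deduces $\vec{v}_{\alpha_i}\parallel\vec{v}_{\beta_i}$ and collapses the rank of the $\vec{b}$-coefficients at the device repairing $\vec{b}$, whereas you collapse the rank of the $\vec{a}$-coefficients at the device repairing $\vec{a}$. The two variants are equivalent in substance and difficulty.

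However, the step you flag as the main obstacle is a genuine gap, not a technicality, and it is exactly the gap the paper leaves unaddressed: the whole coupling argument needs the coordination message to carry a \emph{nonzero} $\vec{c}$-component. The paper's inference ``$\vec{z}_C$, and hence $\vec{v}_\gamma$, must be colinear to all $\vec{v}_{\beta_i}\vec{C}_i$'' is vacuous when $\vec{z}_C=0$, since the zero vector lies in every one-dimensional span. Worse, the dichotomy you propose in order to close the gap --- any admissible combination either keeps its $\vec{c}$-part in direction $\vec{u}'_\gamma$ or destroys the $\vec{a}$-content --- is false. Take $k=3$, $d=4$, $t=2$, $\alpha=3$, $\vec{C}_i=I$, let each $\vec{B}_i$ have first row $t_i(1,0,0)$ and each $\vec{A}_i$ have second row $u_i(0,1,0)$ with the $t_i,u_i$ distinct and nonzero and all other rows generic, and use repair vectors $\vec{v}_{\alpha_i}=\vec{v}_\gamma=(1,0,0)$ and $\vec{v}_{\beta_i}=\vec{v}'_\gamma=(0,1,0)$. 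Then every sub-block collected by the $\vec{a}$-device has the form $(\textrm{first row of }\vec{A}_i)\,\vec{a}+t_i b_1+c_1$ and every sub-block collected by the $\vec{b}$-device has the form $(\textrm{second row of }\vec{B}_i)\,\vec{b}+u_i a_2+c_2$, so independent alignment holds at both devices, with $\vec{u}_\gamma=(1,0,0)\not\parallel(0,1,0)=\vec{u}'_\gamma$. The $\vec{b}$-device now sends the combination of its collected sub-blocks whose $\vec{b}$-part is proportional to $(1,0,0)$ and whose $c_2$-part cancels (three linear conditions on its four coefficients); this message equals $b_1$ plus a multiple of $a_2$: its $\vec{c}$-part is zero, it is perfectly aligned at the receiver, and it hands the $\vec{a}$-device the value of the aligned interference $b_1$, after which the five received equations are generically solvable for $a_1,a_2,a_3$. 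So in this degenerate regime the $\vec{a}$-rank does not collapse and no contradiction can be derived; note that both your proof and the paper's use nothing beyond alignment, recoverability for this one failure pattern, and invertibility of the $\vec{B}_i,\vec{C}_i$, all of which hold here. As it stands, then, your argument (exactly like the published one) only excludes repair schemes whose coordination messages carry nonvanishing interference from the third block; disposing of the schemes above requires an additional idea --- for example exploiting the repair of other failure patterns, or the full MDS property rather than mere invertibility --- which neither your proposal nor the paper supplies.
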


\begin{proof}
Since any MDS code $\mathcal{C}$ can be turned into a equivalent systematic code $\mathcal{C'}$ (as explained in \cite{Shah2010b}), we base our proof on Lemma~\ref{thm:lemsys}. Indeed, if it was possible to repair exactly MSCR codes  with $k \ge 3$ and $t \ge 2$, it would be possible to build systematic MSCR codes that can be repaired exactly.
\end{proof}

\begin{corollary}
When requiring interference alignment to be applied independently on all devices, it is not possible to repair exactly adaptive regenerating codes with $k \ge 3$  in the scalar case (\emph{i.e.}, $\MM = k(d-k+t)$ such that each device stores only $d-k+t$ sub-blocks of size $\beta=1$). 
\end{corollary}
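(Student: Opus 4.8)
The plan is to obtain this statement as a direct reduction to the preceding Theorem, exploiting the defining feature of adaptive regenerating codes: by construction they must cope with a \emph{range} of simultaneous failure counts rather than a single fixed one. Recall that an adaptive regenerating code~\cite{NetCod2011} sits at the minimum storage point and is required to support exact repair for varying numbers $t$ of simultaneously failed devices, the contacted set $d$ adapting accordingly. This is exactly what separates it from an ordinary MSCR code tied to one value of $t$, and it is why the corollary need not carry the explicit hypothesis $t \ge 2$: adaptivity already forces the multiple-failure regime to be handled.

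First I would fix, inside the operation of the adaptive code with $k \ge 3$, a single repair instance with some $t \ge 2$ simultaneous failures. For that instance the storage per device $\alpha = d-k+t$ and the file size $\MM = k(d-k+t)$ coincide with the MSCR parameters appearing in the Theorem. Hence the restriction of the adaptive code to this one multiple-failure repair is nothing other than an exact scalar MSCR code for parameters $(n,k,d,t)$ with $k \ge 3$ and $t \ge 2$, realised under the same independent interference alignment that the adaptive code is assumed to use on all devices.

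Next I would simply invoke the Theorem on this restricted code. Since the Theorem asserts that no exact scalar MSCR code with $k \ge 3$ and $t \ge 2$ can be repaired under independent interference alignment, the existence of the adaptive code leads to a contradiction. Therefore no adaptive regenerating code with $k \ge 3$ can be repaired exactly when interference alignment is applied independently on all devices, which is the corollary.

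The reduction is essentially immediate once set up, so I do not expect a genuine obstacle; the only point requiring care is the observation that an adaptive code must realise a \emph{bona fide} $t \ge 2$ multiple-failure repair and cannot silently degenerate into a family of single-failure ($t=1$) repairs. This is guaranteed by the definition of adaptivity in~\cite{NetCod2011}, which mandates handling every admissible value of $t$, so the $t \ge 2$ case is unavoidable and the specialisation to an MSCR instance is legitimate. I expect this definitional point to be the main subtlety of the argument.
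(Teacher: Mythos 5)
Your proposal is correct and follows essentially the same route as the paper: the paper's proof likewise observes that an exact adaptive regenerating code with $k \ge 3$ would yield an exact scalar MSCR code by fixing values of $d$ and $t$ (with $t \ge 2$), contradicting the preceding impossibility theorem. Your write-up is somewhat more explicit about why adaptivity forces a genuine $t \ge 2$ repair instance, but the underlying reduction is identical.
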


\begin{proof}
Since the repair of  adaptive regenerating codes with $k \ge 3$ and $t \ge 2$ is very similar to the repair of of MSCR codes, the impossibility result also applies to adaptive regenerating codes. In particular, exact MSCR codes could be derived from exact adaptive regenerating codes by fixing values of $d$ and $t$ if such adaptive regenerating codes existed.
\end{proof}

\begin{lemma}
When requiring interference alignment to be applied independently on all devices, it is not possible to repair exactly systematic MSCR codes with $k \ge 3$ and $t \ge 2$ in the scalar case (\emph{i.e.}, $\MM = k(d-k+t)$ such that each device stores only $d-k+t$ sub-blocks of size $\beta=1$). 
\label{thm:lemsys}
\end{lemma}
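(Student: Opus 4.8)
The plan is to work with the equivalent systematic code and derive a contradiction purely from the linear constraints imposed by exact recovery plus independent alignment. Fix the $t$ failed (now systematic) devices; among them single out two, repairing $\vec{a}$ and $\vec{b}$, and invoke $k\ge 3$ to pick a third block $\vec{c}$ distinct from both. Write each redundancy device as $\vec{r}_i=\vec{A}_i\vec{a}+\vec{B}_i\vec{b}+\vec{C}_i\vec{c}+\cdots$, where the MDS property (from any $k$ devices one recovers the file) forces $\vec{A}_i,\vec{B}_i,\vec{C}_i$ to be invertible. At the device repairing $\vec{a}$, aligning the $\vec{b}$-interference to a single direction $\vec{z}_\beta^{(a)}$ pins the collect vector to $\vec{p}_i^{(a)}\propto\vec{z}_\beta^{(a)}\vec{B}_i^{-1}$, and aligning the $\vec{c}$-interference to $\vec{z}_\gamma^{(a)}$ then requires $\vec{z}_\beta^{(a)}\vec{B}_i^{-1}\vec{C}_i\parallel\vec{z}_\gamma^{(a)}$ for every $i$; symmetrically, at the device repairing $\vec{b}$ one gets $\vec{p}_i^{(b)}\propto\vec{z}_\alpha^{(b)}\vec{A}_i^{-1}$ and $\vec{z}_\alpha^{(b)}\vec{A}_i^{-1}\vec{C}_i\parallel\vec{z}_\gamma^{(b)}$.

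The crucial step, and the one I expect to be the main obstacle, is to show that the alignment direction of the shared interfering block $\vec{c}$ must \emph{coincide} in the two simultaneous repairs, i.e. $\vec{z}_\gamma^{(a)}\parallel\vec{z}_\gamma^{(b)}$. This is exactly where the coordination step and the hypothesis $k\ge 3$ interact: the symbol exchanged between the two repaired devices carries residual information about $\vec{c}$, and since each device holds $\vec{c}$ only through its own one-dimensional aligned combination, the $\vec{c}$-component of that coordination symbol lies along $\vec{z}_\gamma^{(b)}$ at the sender yet must be aligned to $\vec{z}_\gamma^{(a)}$ at the receiver; were these two directions independent, the device repairing $\vec{a}$ would see $\vec{c}$-interference spanning two dimensions, and the dimension count $\mathcal{M}=k(d-k+t)$ with $\beta=1$ would no longer close. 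Hence independent alignment together with the coordinated exchange forces a common direction $\vec{z}_\gamma$ for $\vec{c}$ (the only escape being that the exchanged symbol is free of $\vec{c}$, a degenerate sub-case I would treat separately by the same counting). For $k=2$ no such third block exists, which is consistent with the construction of Section~\ref{sec:code} going through.

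Once the common direction is available, the collapse is immediate algebra. From $\vec{z}_\beta^{(a)}\vec{B}_i^{-1}\vec{C}_i\parallel\vec{z}_\gamma\parallel\vec{z}_\alpha^{(b)}\vec{A}_i^{-1}\vec{C}_i$ and the invertibility of $\vec{C}_i$ I cancel $\vec{C}_i$ to get $\vec{z}_\beta^{(a)}\vec{B}_i^{-1}\parallel\vec{z}_\alpha^{(b)}\vec{A}_i^{-1}$, and multiplying on the right by $\vec{A}_i$ yields $\vec{p}_i^{(a)}\vec{A}_i\propto\vec{z}_\beta^{(a)}\vec{B}_i^{-1}\vec{A}_i\parallel\vec{z}_\alpha^{(b)}$. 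In words: \emph{every} redundancy device delivers to the device repairing $\vec{a}$ the very same scalar functional $\vec{z}_\alpha^{(b)}\vec{a}$ of the desired block, up to scale. Thus the entire contribution of the $d-k+t$ redundancy devices to the knowledge of $\vec{a}$ spans a single dimension.

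To conclude I count the dimension of the $\vec{a}$-information reaching the device repairing $\vec{a}$: the $k-t$ live systematic devices contribute $0$ (each stores a block other than $\vec{a}$); all redundancy devices together contribute at most $1$ by the collapse above; and each of the $t-1$ coordination symbols contributes at most $1$, since a coordinating device carries $\vec{a}$ only through its own aligned interference direction $\vec{z}_\alpha^{(j)}$. The total is at most $1+(t-1)=t$, whereas exact recovery of $\vec{a}$ requires $\alpha=d-k+t$ independent equations, and $\alpha-t=d-k\ge 1$ since $d>k$. Hence fewer than $\alpha$ dimensions of $\vec{a}$ are available and exact repair is impossible, establishing Lemma~\ref{thm:lemsys}. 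The same argument applies to the repair of redundancy devices after the change of variables of Section~\ref{sec:code}, so the restriction to systematic failures is without loss of generality.
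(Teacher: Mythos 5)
Your proof follows essentially the same route as the paper's: the same alignment constraints pin the collect vectors to $\vec{z}_\beta^{(a)}\vec{B}_i^{-1}$ and $\vec{z}_\alpha^{(b)}\vec{A}_i^{-1}$, the same key step uses the coordination symbol to force a common alignment direction for the third block $\vec{c}$ at both repaired devices (exactly where $k\ge 3$ and $t\ge 2$ interact in the paper), and the same cancellation of the invertible $\vec{C}_i$ collapses the two families of collect vectors into collinear ones, yielding a rank-deficiency contradiction --- you exhibit it on the recovery of $\vec{a}$ whereas the paper exhibits it on the recovery of $\vec{b}$, a purely cosmetic mirror image. Your explicit flagging of the degenerate sub-case (a coordination symbol carrying no $\vec{c}$-component) concerns a point the paper's proof passes over silently, so the proposal is at least as careful as the published argument.
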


\begin{IEEEproof}
Let us consider a code with $k \ge 3$, $t \ge 2$, $d > k$ , $n \ge d+t$  and $\alpha=d-k+t$. Let us assume that we want independent interference alignment (\emph{i.e.}, each interfering block spans only a sub-space of dimension 1).

The $k$ first devices store systematic blocks as vectors $\vec{a}=(a_i)_{1\le{}i\le{}\alpha}$, $\vec{b}=(b_i)_{1\le{}i\le{}\alpha}$, $\vec{c}=(c_i)_{1\le{}i\le{}\alpha}$\dots The $n-k$ remaining devices store redundancy blocks as $\vec{r}_j=\vec{A}_1\vec{a}+\vec{B}_1\vec{b}+\vec{C}_1\vec{c}+\dots$. Thus leading to a set-up similar to the one depicted on Figure~\ref{fig:n6k3t2d4}.

We are going to proof, by contradiction, that exact repairs of systematic codes in the scalar case (\emph{i.e.}, $\beta=1$) are not achievable when  $k \ge 3$ and $t \ge 2$.  For the sake of clarity, our proof will describe the case of $t=2$, $k=3$ and $d=4$ but it naturally extends to any larger values. 

Assume that it is possible to repair exactly. Hence, it is possible to repair the simultaneous failure of devices storing $\vec{a}$ and $\vec{b}$. We consider this case  and  examine how exact repairs constraint the system.

For each device being repaired, all live devices project what they store onto a single vector and send this vector to the said device being repaired. Then, the devices being repaired coordinate by exchanging a single vector (a projection of what they have received so far).
Hence, the device repairing $\vec{a}$ receives, at the end of both the collecting step and the coordination step:
\begin{equation}
\matdesc{ 0 \\ \vec{v}_{\alpha_1}\vec{A}_1 \\ \vec{v}_{\alpha_2}\vec{A}_2 \\ \vec{v}_{\alpha_3}\vec{A}_3 \\ \vec{w}_A }(\vec{a})+
\matdesc{ 0 \\ \vec{v}_{\alpha_1}\vec{B}_1 \\ \vec{v}_{\alpha_2}\vec{B}_2\\ \vec{v}_{\alpha_3}\vec{B}_3\\ \vec{w}_B}(\vec{b})+
\matdesc{ \vec{v}_\gamma \\ \vec{v}_{\alpha_1}\vec{C}_1 \\ \vec{v}_{\alpha_2}\vec{C}_2 \\ \vec{v}_{\alpha_3}\vec{C}_3 \\ \vec{w}_C}(\vec{c})
\end{equation}

To be able to recover $\vec{a}$, we must be able to decode the $d-k+t=3$ desired unknows of $\vec{a}$ out of the $d+t-1=5$ equations containing a total of $k(d-k+t)=9$ unknowns. Hence, when aligning independently we must have, 
\begin{align}
\rank\left({\matdesc{ \vec{v}_\gamma \\ \vec{v}_{\alpha_1}\vec{C}_1\\ \vec{v}_{\alpha_2}\vec{C}_2\\ \vec{v}_{\alpha_3}\vec{C}_3 \\ \vec{w}_C}()}\right) = 1\textrm{,}
&&
\rank\left({\matdesc{ 0 \\ \vec{v}_{\alpha_1}\vec{B}_1 \\ \vec{v}_{\alpha_2}\vec{B}_2 \\ \vec{v}_{\alpha_3}\vec{B}_3\\\ \vec{w}_B}()}\right) = 1
\label{eq:aligna}
\end{align}
and, 
\[
\rank\left({\matdesc{  0 \\ \vec{v}_{\alpha_1}\vec{A}_1 \\ \vec{v}_{\alpha_2}\vec{A}_2 \\ \vec{v}_{\alpha_3}\vec{A}_3 \\ \vec{w}_A}()}\right) = 3
\]

  Similarly, to be able to recover $\vec{b}$, we must have,  
\begin{align}
\rank\left({\matdesc{ \vec{z}_{C} \\ \vec{v}'_\gamma \\ \vec{v}_{\beta_1}\vec{C}_1\\ \vec{v}_{\beta_2}\vec{C}_2\\ \vec{v}_{\beta_3}\vec{C}_3}()}\right) = 1\textrm{,}
&&
\rank\left({\matdesc{\vec{z}_{A} \\ 0 \\ \vec{v}_{\beta_1}\vec{A}_1 \\ \vec{v}_{\beta_2}\vec{A}_2 \\ \vec{v}_{\beta_3}\vec{A}_3}()}\right) = 1
\label{eq:alignb}
\end{align}
and, 
\[
\rank\left({\matdesc{ \vec{z}_{B}\\  0 \\ \vec{v}_{\beta_1}\vec{B}_1 \\ \vec{v}_{\beta_2}\vec{B}_2 \\ \vec{v}_{\beta_3}\vec{B}_3 }()}\right) = 3
\]

Let us consider the choice of vectors $\vec{v}_\gamma$, $\vec{v}_{\alpha_i}$, $\vec{v}_{\beta_i}$   and of matrices $\vec{C}_i$ that allows exact repairs (\emph{i.e.}, such that constraints on ranks are satisfied) with coordination (\emph{i.e.}, $k \ge 3$ and $t \ge 2$):
\begin{itemize}
\item All $\vec{v}_{\alpha_i}\vec{C}_i$ must be collinear according to \eqref{eq:aligna}.
\item All  $\vec{v}_{\beta_i}\vec{C}_i$ must be collinear too according to \eqref{eq:alignb}.
\item During the coordination step, what is sent by the device repairing $\vec{a}$ will necessarily be collinear to $\vec{v}_{\alpha_i}\vec{C}_i$  (\emph{i.e.}, what is stored) and  to vector $\vec{v}_{\gamma}$. Let us name this vector, which is colinear to $\vec{v}_{\gamma}$, $\vec{z}_C$. According to \eqref{eq:alignb}, $\vec{z}_{C}$, and hence $\vec{v}_{\gamma}$  must be colinear to all  $\vec{v}_{\beta_i}\vec{C}_i$. Hence, we have: 
$\forall{} i$,  $\vec{v}_{\alpha_i}=\nu_i\vec{v}_{\gamma}\vec{C}^{-1}_i$ and $\vec{v}_{\beta_i}=\mu_i\vec{v}_{\gamma}\vec{C}^{-1}_i$. Note that the matrix $\vec{C}_i$ is invertible to guarantee the MDS property.
\end{itemize}

As a result, for all $i \in \{1\dots{}d\}$, vectors $\vec{v}_{\alpha_i}$ and $\vec{v}_{\beta_i}$ are collinear since 
\begin{equation}
\vec{v}_{\alpha_i} = \frac{\nu_i}{\mu_i}\vec{v}_{\beta_i}
\label{eq:colin}
\end{equation}

Let us consider the choice of matrices for $\vec{B_i}$ that allows exact repairs on the device repairing $\vec{a}$. According to~\eqref{eq:aligna}, we must have  $\rank{ (\vec{B}_1\vec{v}_{\alpha_1},\dots,\vec{B}_d\vec{v}_{\alpha_d})^t } = 1$,  which is equivalent to:
\begin{equation}
\rho_1\vec{v}_{\alpha_1}\vec{B}_1 = 
\rho_2\vec{v}_{\alpha_2}\vec{B}_2 = 
\dots = 
\rho_d\vec{v}_{\alpha_d}\vec{B}_d
\label{eq:aligned}
\end{equation}

Combining \eqref{eq:colin} and \eqref{eq:aligned}, we can deduce that 
\begin{equation}
\rho_1\frac{\nu_1}{\mu_1}\vec{v}_{\beta_1} \vec{B}_1= 
\rho_2\frac{\nu_2}{\mu_2}\vec{v}_{\beta_2}\vec{B}_2 = 
\dots = 
\rho_d\frac{\nu_d}{\mu_d}\vec{v}_{\beta_d}\vec{B}_d
\end{equation}

As a result, $\rank{ (\vec{B}_1\vec{v}_{\beta_1},\dots,\vec{B}_d\vec{v}_{\beta_d})^t } = 1$ which is in contradiction with the hypothesis~\eqref{eq:alignb}, that $\vec{b}$ can be repaired too (\emph{i.e.}, $\rank{ (\vec{B}_1\vec{v}_{\beta_1},\dots,\vec{B}_d\vec{v}_{\beta_d})^t } \ge d-1$)
Hence, the exact repair of two failed devices when $k>3$ is impossible.

A rather similar proof can be performed assuming that $\vec{c}$ is being recovered too ($t=k$). In this case, the vector about $\vec{c}$ being sent to the devices repairing $\vec{a}$ and $\vec{b}$ during the coordination step needs to be colinear too. Thus leading to the same conclusion that the system is over constrained.

The proof naturally extends to any higher value of $k$ and $t$. Hence, repairing exactly with $d>k$ and $t > 2$ is impossible in the case of scalar codes (\emph{i.e.}, $\beta=1$) based on independent interference alignment.

\end{IEEEproof}

This impossibility means that at some point, the amounts of information that goes through the information flow graph~\cite{Dimakis2010,NetCod2011} is too low.  Indeed, to ensure that the file is kept over time, all cuts between the source $S$ and any data collector $DC$ in a graph representing the transfer of data between devices during repairs must be greater than or equal to $\mathcal{M}$~\cite{NetCod2011}. However, if we consider the graph of Figure~\ref{fig:impgraph} and force the device storing $c$ to send the same $\beta$ bits of information (by requiring alignment of the information) to both the device storing $\vec{a}$ and the device storing $\vec{b}$, then the cut shown on the graph of Figure~\ref{fig:impgraph} has an insufficient capacity of $8\beta<\mathcal{M}$.

\begin{figure}[h]
	\centering%
	\includegraphics[width=0.85\linewidth]{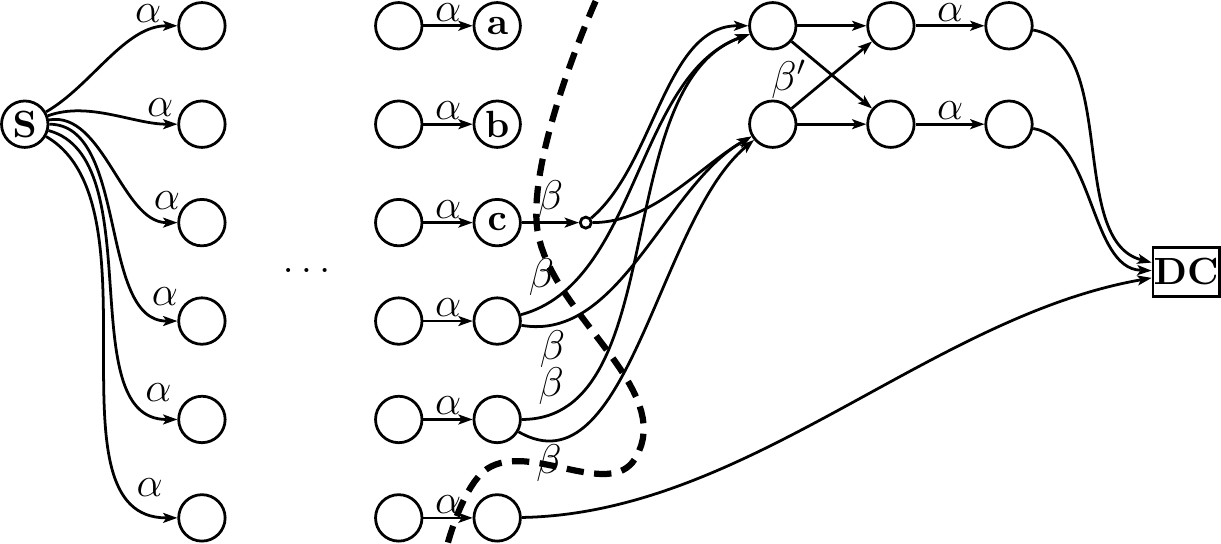}%
	\caption{If the system is constrained so that the third (or any additional) interfering device sends the same information to all devices because of alignment constraints, the flow that can go through the network is no longer equal to the file size $\mathcal{M}$. In the example of the Figure, where $\mathcal{M}=9\beta$ and $\alpha=3\beta$, the capacity of the cut shown is only $\alpha + 5\beta = 8 \beta < \mathcal{M}$. As a result, the amounts of information that go through the network are not sufficient.}%
	\label{fig:impgraph}%
\end{figure}

Interference alignment aims at encoding transmitted data such that all interferences at the receiver (\emph{i.e.}, undesired signals) are perfectly aligned and do not inhibit the reception of the desired signal. In the context of wireless, the channel matrices defining the transmission are imposed by nature and encoding matrices are carefully chosen to achieve interference alignment. When considering regenerating codes for single failures, both the channel matrices and the encoding matrix can be chosen, but it is required that one single encoding matrix allows for interference alignment at multiple receivers, each receiver acquiring a different signal. Yet, this allows a wide set of parameters to be considered. However, with coordinated regenerating codes relying on independent interference alignment, where at least two devices need to coordinated, any undesired signal from any third device must be aligned in the same way on the first and second device that coordinate. As we have just shown, this over-constrains the system thus exact repairs using independent interference alignment are not possible.

\begin{figure*}[t]%
\includegraphics[width=\linewidth]{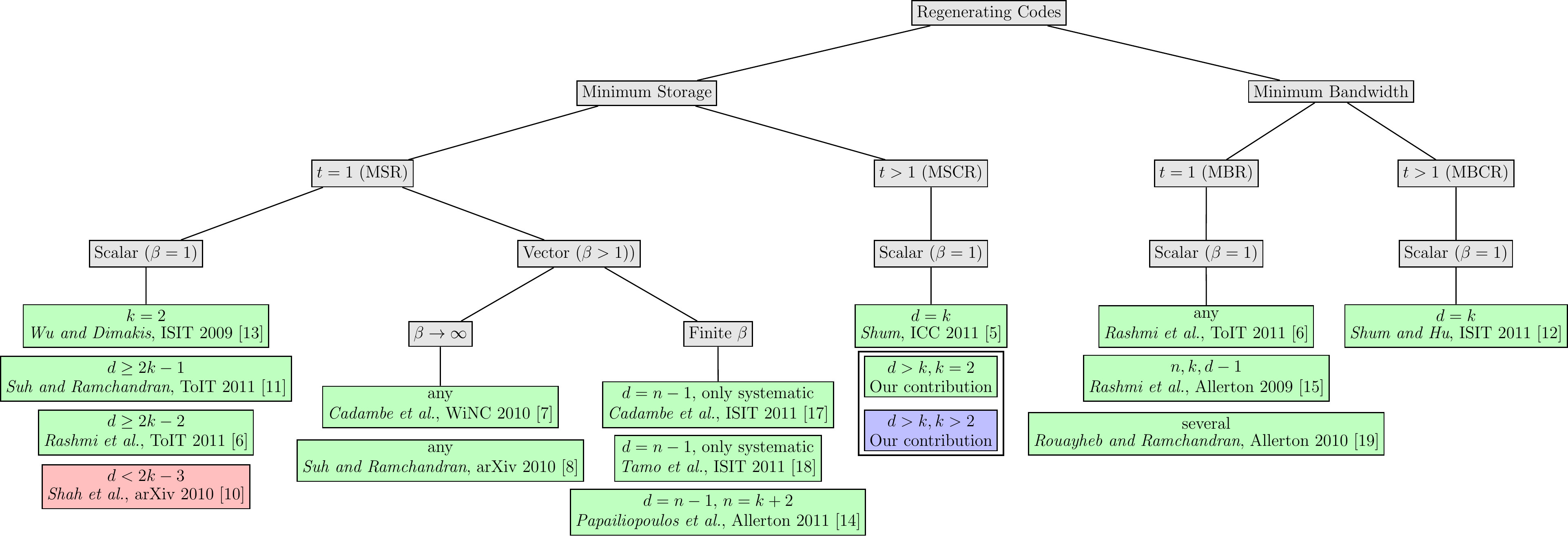}%
\caption{Current state of art results for exact repair codes. It compiles the most recent code constructions and impossibility results. Most codes presented here achieve exact repair for both the systematic and the redundancy part. The contributions of this paper concern scalar MSCR codes for multiple repairs (\emph{i.e.}, $t>1$, $\beta=1$ and $d>k$) and are surrounded by the black box.}%
\label{fig:soatree}%
\end{figure*}%
\section{Related Work}
\label{sec:related}
Figure~\ref{fig:soatree} gives an overview of results related to the construction of exact regenerating codes. Green nodes in the tree corresponds to achievability results while red nodes indicate that it has been shown that it is not possible to build codes for the specified parameters. The blue node correspond to an impossibility in some cases. Two main classes of codes exist, namely scalar and vector codes.  Scalar codes rely on indivisible sub-blocks of size $\beta=1$ as shown on Figure~\ref{fig:exact}. Yet, scalar codes are not always sufficient as explained hereafter. Hence vector codes, relying on sub-packetization, have been defined. In these codes, manipulated sub-blocks are smaller than the smallest amount of information to be transmitted (\emph{i.e.}, sub-blocks are of size $\frac{\beta}{r}$ such that to $r$ indivisible sub-blocks are transmitted when sending $\beta=r$) as shown on Figure~\ref{fig:sub} where $\beta=2$.

Among all possible regenerating codes, most of the studies have focused on the minimum storage point. For MSR codes that are able to repair single failures ($t=1$), studies have heavily relied on interference alignment, first applied to $k=2$ in \cite{Wu2009}. The best known scalar codes either use interference alignment~\cite{Suh2011} to allow  $d \ge 2k-1$, or use the product matrix framework~\cite{Rashmi2011} to allow $d \ge 2k-2$.  However, scalar codes cannot be used to achieve $d <  2k-3$ as shown in~\cite{Shah2010b}. 

To circumvent this impossibility of constructing scalar MSR codes when $d<2k-3$, it has been proposed to rely on vector codes (\emph{i.e.}, $\beta>1$). Vector codes supporting exact repair can be built for any values $n,k,d$ when $\beta  \rightarrow \infty$~\cite{Cadambe2010,Suh2010a}. However, these constructions require infinite sub-packetization and, hence, are not practical.  Recent works~\cite{Cadambe2011, Tamo2011} have shown that finite sub-packetization $\beta=(n-k)^k$  is sufficient to perform exact repair of the systematic devices leading to practical codes. The repair of all devices is possible when $d=n-1, n=k+2$ as shown in~\cite{Papailiopoulos2011}. As a result, the exact repair of all devices with vector MSR codes is not fully solved.

For the case of multiple failures $t>1$, only scalar MSCR codes ($\beta=1$) have been considered. Previous work~\cite{Shum2011} only considered the degenerated case of $d=k$ where the costs of coordinated/cooperative regenerating codes is equivalent to the costs of erasure correcting codes with lazy repairs. In this work, where $\alpha=t$, the repair boils down to repairing  in parallel $t$  independent erasure correcting codes (\emph{i.e.}, no network coding is needed). The work we present in this paper is the first to consider a non-degenerated case $d>k$ and to apply interference alignment when multiple failures are repaired simultaneously leading to the codes we define in Section~\ref{sec:code}, which are restricted to $k=2$. Furthermore, in Section~\ref{sec:nonachiev}, we show that independant interference alignment with scalar codes is not sufficient for building exact MSCR codes when $k \ge 3$.
 
With respect to the MBR point, the best known construction~\cite{Rashmi2011} are scalar codes based on the product matrix framework and allow the repair for any value of $n,k,d$. Some interesting alternative codes~\cite{Rashmi2009,Rouayheb2010} allow repair by transfer (\emph{i.e.}, without performing any linear operation) and rely on fractional repetition codes. 

When multiple failures are repaired simultaneously, the only MBCR codes again consider the case of $d=k$ and map to repairing $t$ independant erasure correcting codes~\cite{Shum2011b}. The existence of MBCR codes when $d>k$ remains an open question.

Finally, regenerating codes~\cite{Dimakis2007,Dimakis2010} can be extended into adaptive codes~\cite{NetCod2011,Wang2010} that support dynamic systems. The first supports repairing multiple failures optimally and has a constant $\beta$ as long as $n=d+t$ (\emph{i.e.}, as long as the total system size $n$ including both live devices and failed devices being repaired remains constant) that makes practical implementation easier~\cite{Kermarrec2010INRIA}. These codes are highly related to minimum storage codes. In particular, the existence (resp. non-existence) of exact adaptive regenerating codes is strongly tied to the existence (resp. non-existence) of exact MSCR codes. In particular, our exact MSCR codes of Section~\ref{sec:code} are also adaptive regenerating codes, and the impossibility shown in Section~\ref{sec:nonachiev} also applies to exact adaptive regenerating codes.

\section{Conclusion}
In this paper, we applied independent interference alignment to minimum storage coordinated regenerating codes (MSCR) and show that this technique allows exact repair if and only if $k=2$. Our results also apply to adaptive regenerating codes thus providing an interesting solution for the implementation of practical systems when $k=2$.

To overcome the impossibility shown in this paper, several tracks can be considered: \emph{(i)} considering a technique that does not align the interferences independently, \emph{(ii)} building vector codes (\emph{i.e.}, relying on sub-packetization with $\beta > 1$ by opposition to scalar codes $\beta=1$ considered in this paper  as done in \cite{Cadambe2010,Suh2010a,Cadambe2011,Tamo2011}), or \emph{(iii)} building minimum bandwidth coordinated regenerating codes (MBCR) (for single failure, codes exist for all parameters~\cite{Rashmi2011}).
 Finally, the related question of achievable limits for high rate exact MSCR when relying on scalar codes remains open.

\bibliographystyle{IEEEtran}
\bibliography{regenerating}

\end{document}